\documentclass[sn-mathphys]{sn-jnl}
\usepackage{enumerate}
\usepackage{float}
\usepackage{graphicx}
\usepackage{subfigure}
\usepackage{algorithm}
\jyear{2022}

\theoremstyle{thmstyleone}
\newtheorem{theorem}{Theorem}

\theoremstyle{thmstyletwo}

\theoremstyle{thmstylethree}

\raggedbottom

\begin{document}

\title[W.J.Liu et al.]{Public Verifiable Measurement-Only Blind Quantum Computation Based On Entanglement Witnesses}

\author*[1,2]{\fnm{Wen-Jie} \sur{Liu}}\email{wenjiel@163.com}

\author[1]{\fnm{Zi-Xian} \sur{Li}}\email{zixianli157@163.com}
\equalcont{These authors contributed equally to this work.}

\author[1]{\fnm{Wen-Bo} \sur{Li}}\email{lwb0408@163.com}
\equalcont{These authors contributed equally to this work.}

\author[1]{\fnm{Qi} \sur{Yang}}\email{yangqi7275@163.com}
\equalcont{These authors contributed equally to this work.}

\affil*[1]{\orgdiv{School of Software}, \orgname{Nanjing University of Information Science and Technology}, \orgaddress{\street{No. 219 Ningliu Road}, \city{Nanjing}, \postcode{210044}, \state{Jiangsu}, \country{China}}}

\affil[2]{\orgdiv{Engineering Research Center of Digital Forensics}, \orgname{Ministry of Education}, \orgaddress{\street{No. 219 Ningliu Road}, \city{Nanjing}, \postcode{210044}, \state{Jiangsu}, \country{China}}}

\abstract{Recently Sato et al. proposed an public verifiable blind quantum computation (BQC) protocol by inserting a third-party arbiter. However, it isn’t true public verifiable in a sense, because the arbiter is determined in advance and participates in the whole process. In this paper, a public verifiable protocol for measurement-only BQC is proposed. The fidelity between arbitrary states and the graph states of $2$-colorable graphs is estimated by measuring the entanglement witnesses of the graph states, so as to verify the correctness of the prepared graph states. Compared with the previous protocol, our protocol is public verifiable in the true sense by allowing other random clients to execute the public verification. It also has greater advantages in the efficiency, where the number of local measurements is $O(n^3\log{n})$ and graph states' copies is $O(n^2\log{n})$.}
 
\keywords{Blind quantum computation, Public verifiability, Graph state, Entanglement witness}

\maketitle

\section{Introduction}\label{sec1}


Blind quantum computation (BQC) allows any client (known as Alice) with weak quantum ability to delegate her computing tasks to a quantum server (known as Bob) without leaking her privacy. BQC is divided into two categories: circuit-based BQC (CBQC)\cite{RN1, RN2,RN3,RN4,RN5} and measurement-based BQC (MBQC)\cite{RN6, RN7, RN8, RN9, RN10, RN11, RN12, RN13, RN14,RN15, RN17,RN18}. CBQC realizes blindness through quantum circuits, where the client needs to have the ability to operate some quantum gates. In MBQC, the client only needs to prepare and measure quantum states. Recently, Morimae and Fuji\cite{RN7} proposed a new type of BQC, called measurement-only BQC (MOBQC), where the server prepare the resource state, while the client just should perform single-qubit measurements.

As more and more BQC protocols are proposed, the \textit{verifiability} of BQC has attracted much attention. In a verifiable BQC protocol, each party can verify whether other party is honest. Although Broadbent et al.\cite{RN6} have explored the possibility of verifiability in their protocol, it's not complete. Based on the former, Fitzsimons et al.\cite{RN8} proposed a relatively complete verifiable version. In this protocol, the verifier encodes the computation task (including the verification mechanism) into a series of single qubits, and then executes BQC. According to the results, it can be verified whether the computation has been correctly executed. In addition, several BQC protocols\cite{RN2, RN8, RN9} verifies the correctness of the input of BQC by checking the trap qubits randomly hidden in the input state. For MOBQC\cite{RN7}, it's proposed to verify graph states\cite{RN10,RN11,RN12,RN13,RN14}. Stabilizer testing\cite{RN14} is a verification technology of the graph state without setting traps. The server generates graph states and send them to the client, and the latter then directly measure stabilizers on the sent graph states to verifies the correctness. However, these verifiable MOBQC protocols \cite{RN10,RN12,RN13,RN14} using stabilizer test are of high resource consumption, which is an obstacle to the development of scalable quantum computation. In 2021, Xu et al.\cite{RN15} proposed a verifiable BQC protocol based on entanglement witnesses, which effectively reduces the resource consumption of verification by measuring entanglement witnesses\cite{RN16} that can detect the graph states.

The above verifiable protocols only allow Alice to verify Bob's honesty, which is called \textit{private verifiability}. However, private verifiability has the following problems: Alice can detect any dishonest behavior of Bob, but the detection results can not make any third party trusted; even if Bob is honest, he can be framed by Alice. In 2016, on the basis of unconditionally verifiable BQC protocol\cite{RN8}, Kentaro\cite{RN17} proposed the concept of \textit{public verifiability} and provided a corresponding protocol based on classical cryptography. In 2019, Sato et al.\cite{RN18} chose to insert a trusted third party as the arbiter to build an arbitrable BQC protocol which realizes public verifiability in a sense. However, the public verifiability depends on the arbiter, which is determined in advance and participates in the whole process, thus the protocol isn't true public verifiable in a sense.

In this paper, inspired by the verifiable mechanism based on entanglement witnesses, we propose a public verifiable MOBQC protocol. The third-party verifier is randomly selected from other clients rather than a specific arbiter, so as to achieve public verifiability in the true sense. In addition, $2$-colorable graphs and entanglement witnesses are introduced to reduce resource consumption. Compared with the number of local measurements ($O(n^{2n+5})$) and of copies of the resource states ($O(n^{2n+5}2^n)$) of Sato et al.\cite{RN18}, our protocol have obvious advantages ($O(n^3\log{n})$ and $O(n^2\log{n})$ respectively). We also consider the communication error and give some error mitigation schemes.

The rest of this paper is organized as follows: In Sect.~\ref{sec2}, we briefly introduce $2$-colorable graph states, entanglement witnesses, and MOBQC. The protocol is presented in Sect.~\ref{sec3} and analyzed in Sect.~\ref{sec4}. Error propagation and mitigation is analyzed in Sect.~\ref{sec5}. The paper concludes with Sect.~\ref{sec6}.

\section{Preliminaries}\label{sec2}

In this section, we briefly introduce $2$-colorable graph states and the entanglement witnesses of them, and then review the basic steps of measurement-only BQC.

\subsection{2-colorable graph state}\label{subsec2.1}

Given an undirected simple graph $G$ with $n$ vertices $i\in V$ and several edges $(i,j)\in E=V\times V$, if all vertices of it can be divided into at least $m$ disjoint subsets ${S_1}, {S_2}, \cdots, {S_m}$, where there is no edge between any pair of vertices in any $S_j,j=1,2,\cdots,m$, then we call $G$ an $m$-colorable graph. We use $n$ qubits to represent vertices of $G$, and the graph state $\left\lvert G \right\rangle $ corresponding to $G$ is defined as $\left\lvert G \right\rangle  = \left( {\prod\nolimits_{\left( {i,j} \right) \in E} {{U_{ij}}} } \right){\left\lvert  +  \right\rangle ^{ \otimes n}}$, where $\left\lvert  +  \right\rangle  = {1 \over {\sqrt 2 }}\left( {\left\lvert 0 \right\rangle  + \left\lvert 1 \right\rangle } \right)$ is the initial state of each vertex and ${U_{ij}}$ is the controlled-$Z$ gate $\left\lvert 0 \right\rangle \left\langle 0 \right\lvert \otimes I + \left\lvert 1 \right\rangle \left\langle 1 \right\lvert \otimes Z$ performed on vertices $i$ and $j$, where $I$ is identity operator and $Z$ is Pauli operator $\sigma_z$. On the other hand, there are $n$ stabilizer ${g_i} = {X_i}\prod\nolimits_{k \in N\left( i \right)} {{Z_k}} $ of $\left\lvert G \right\rangle $, i.e., $g_i\left\lvert G \right\rangle=\left\lvert G \right\rangle$, where $i = 1,2, \cdots,n$, $N\left( i \right)$ is the adjacency points set of vertex $i$ and $X_i, Z_j$ are the Pauli operators $\sigma_x,\sigma_z$ performed on $i,j$ respectively. 

In this paper, we only consider $2$-colorable graph states which are widely used as resource states of BQC, such as brickwork state\cite{RN6} and Raussendorf-Harrington-Goyal (RHG) state\cite{RN19}, of which the preparation and verification are of research value. An example of a $2$-colorable graph is shown in Figure~\ref{fig1}.

\begin{figure}[h]
\centering
\includegraphics[width=0.5\textwidth]{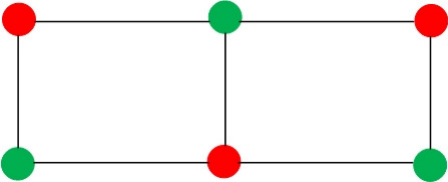}
\caption{A $2$-colorable graph as an example, where red vertices belong to $S_1$ and green vertices belong to $S_2$.}\label{fig1}
\end{figure}

\subsection{Entanglement witness}\label{subsec2.2}

An entanglement witness $W$ is an observable measurement which satisfies: (1) For all separable states ${\varrho _s} $, $tr\left ({W {\varrho_s}} \right) \ge 0 $; (2) At least one entangled state ${\varrho _e} $ satisfies $tr \left ({W{\varrho _e}} \right) < 0 $, where $tr\left (\cdot\right)$ represents matrix trace; then we say that ${\varrho _e} $ is detected by $W$. For an $n$-qubit graph state and some states close to it, based on the colorability of the graph, some witnesses with constant measurement times are proposed. The following $W^{\left(2 \right)} $ is a witness of a $2$-colorable graph state $\left\lvert G \right\rangle$\cite{RN16}:
\begin{equation}\label{eq1}
{W^{\left( 2 \right)}} = 3I - 2\left[ {\prod\limits_{i \in {S_1}} {{{{g_i} + I} \over 2}}  + \prod\limits_{i \in S2} {{{{g_i} + I} \over 2}} } \right],
\end{equation}
where $S_1,S_2$ are two divided sets of the graph. According to the structure of the witness, for a given $2$-colorable graph state, only two measuring settings are needed, and the $j$-th setting is observable $\prod\limits_{i \in {S_j}} {{g_i}} $. The two settings corresponding to the two-colorable graph in Figure~\ref{fig1} are shown in Figure~\ref{fig2}. For the $j$-th measuring setting $\prod\limits_{i \in {S_j}} {{g_i}} $, we only need to measure the qubits corresponding to ${S_j} $, and then measure the qubits corresponding to another subset $\overline{S_j} $ according to the adjacency relationship with $S_j$. Therefore, a setting $\prod\limits_{i \in {S_j}} {{g_i}} $ only needs $O(n)$ local measurement times.

\begin{figure}[H]
	\centering 
	\vspace{-0.35cm} 
	\subfigtopskip=2pt 
	\subfigbottomskip=2pt 
	\subfigcapskip=-5pt 
	\subfigure[]{
		\includegraphics[width=0.45\linewidth]{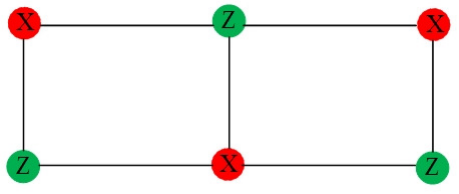}}
	\quad 
	\subfigure[]{
		\includegraphics[width=0.45\linewidth]{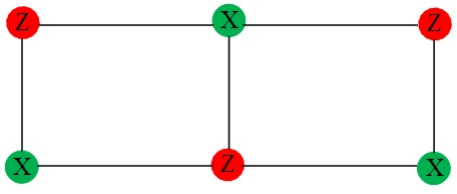}}
	\caption{The two settings corresponding to the graph in Figure~\ref{fig1}, where all red vertices belong to $S_1$ and all green vertices belong to $S_2$. (a)(b) are the observables $\prod\limits_{i \in {S_1}} {{g_i}} ,\prod\limits_{i \in {S_2}} {{g_i}} $ corresponding to $S_1,S_2$ respectively.}
	\label{fig2}
\end{figure}

\subsection{Measurement-only blind quantum computation}\label{subsec2.3}

In MOBQC, the server Bob only needs to prepare the general resource state, and the client Alice only needs to perform quantum measurement. The protocol steps are as follows: Bob prepares the general resource state and then sends the prepared state particles to Alice via quantum channel, then Alice measures the sent particles on the basis determined by her algorithm. The verification of this model is generally aimed at the correctness of the resource state. Bob is often required to prepare multiple copies of the resource state, and some of which are used for verification and one of the rest is used for calculation, as shown in Figure~\ref{fig3}.

\begin{figure}[ht]
\centering
\includegraphics[width=0.8\textwidth]{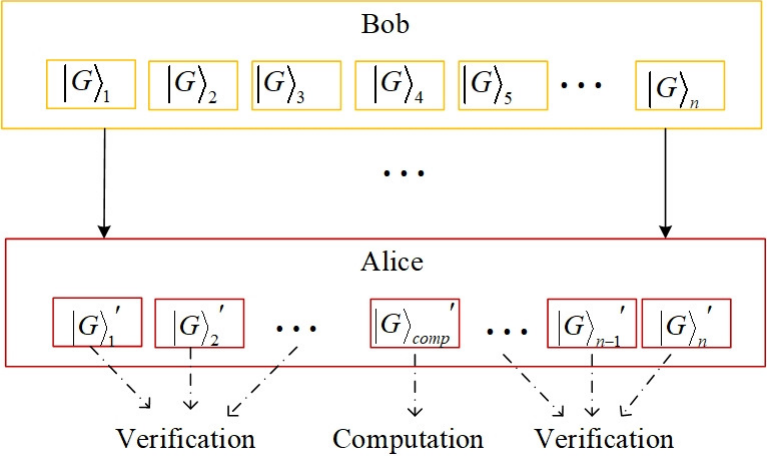}
\caption{The verification process of MOBQC.}\label{fig3}
\end{figure}

\section{Public Verifiable Measurement-only Blind Quantum Computation based on entanglement witnesses}\label{sec3}

\subsection{Verification algorithm}\label{subsec3.1}

Inspired from Xu et al.'s verification mechanism\cite{RN15}, we present a verification algorithm to verify the correctness of the prepared graph state. Given a target graph state $\left \lvert G\right \rangle$ corresponding to a $2$-colorable graph $G$, and an unknown state $\varrho$ to be verified. The two divided subsets of $G$ are denoted as $S_1,S_2$, and the verification process is shown in Algorithm~\ref{algo1}. 

\begin{algorithm}[ht]
\caption{\textbf{Verification Algorithm}: verify the correctness of the prepared graph state $\varrho$}\label{algo1}
\begin{algorithmic}[1]
\Require $2K$ $n$-qubit registers, a required graph state $\left \lvert G\right \rangle$, and a constant $0 \le C \le 2K $.
\Ensure The $2K$ registers store the same prepared $n$-qubit state $\varrho$.
\State Select $K$ registers independently, evenly and randomly from $2K$ registers and mark it as the $1$-st group, then mark the rest $K$ registers as the $2$-nd group.
\State Measure observable $\prod\limits_{i \in {S_j}} {{g_j}} $ on each register in the $j$-th group.
\State Calculate $M_j^\varrho $ by  $M_j^\varrho  = \prod\limits_{i \in {S_j}} {{{{x_i}\prod\nolimits_{k \in N\left( i \right)} {{z_k}}  + 1} \over 2}} $, where $x_i,z_j$ are the results of measuring observables $X_i,Z_j$, and count the number of registers satisfy $M_j^\varrho  = 0$ as ${K_j}$.
\If{${K_1} + {K_2} \le C$}\label{algln2}
        \State Accept.
\Else
        \State Reject.
\EndIf
\end{algorithmic}
\end{algorithm}

In Algorithm~\ref{algo1}, the condition constant $C$ is determined to make sure the fidelity between the prepared state $\varrho$ and the required state $\left \lvert  G\right \rangle$ is high enough. Considering the fidelity estimation process, $C$ isn't fixed, but varies with the order of the verifier, i.e., $C$ will be different for the third-party verifier from the client in our protocol. Therefore, we set $C$ as a pending parameter so as to ensures the scalability of the verification. Based on the above, Algorithm~\ref{algo1} can be applied to public verification. 

\subsection{Proposed protocol}\label{subsec3.2}

In Sato et al.'s protocol\cite{RN18}, Charlie, the third-party arbiter, can arbitrate in case of a dispute between the server Bob and the client Alice. However, the verifier (Charlie) is determined in advance and participates in the whole process, which isn't a true third party independent with Bob and Alice. To achieve a true public verification, i.e., any third party can participate in verification, we removed Charlie's role of third-party verifier, but only retained its storage capacity of quantum state, therefore it's renamed the storage center. The third party that participates in the public verification will be selected from other clients Alice$_2$, Alice$_3$,$\cdots$, Alice$_l$ randomly, where $l$ is the total number of clients. As shown in Figure~\ref{fig4}, there are three parties in the protocol: the server Bob is responsible for preparing the graph states; the set A = \{Alice$_1$, Alice$_2$,$\cdots$, Alice$_l$\} is a set of clients of quantum computation, which are all legal users\cite{RN22, RN23} registered with Charlie; the storage center Charlie has the ability to store quantum states and is responsible for distributing the graph states prepared by Bob to the client and selecting the third-party verifier, and is ensured honesty. When the  protocol is executed between Alice$_1$ and Bob, any other client Alice$_t$ in A where $ t \in \{ {\rm{2}},{\rm{ }} \ldots {\rm{ }},{\rm{ }}l\} $ can perform public verification.

\begin{figure}[ht]
\centering
\includegraphics[width=0.8\textwidth]{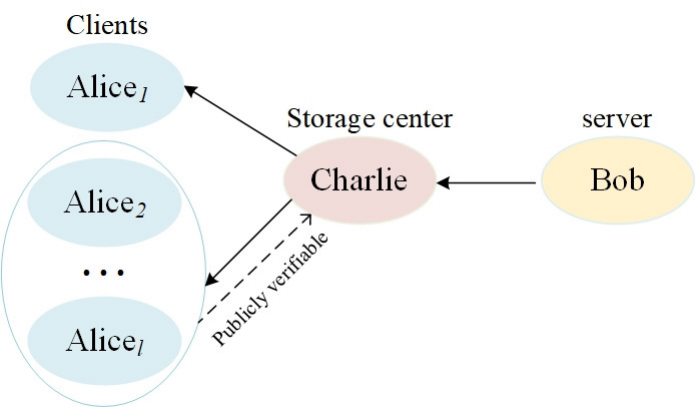}
\caption{The tripartite relationship in the proposed protocol.}\label{fig4}
\end{figure}

The graph states used in our protocol all correspond to $2$-colorable graphs. Taking client Alice$_1$, who has a computing request, as an example, the specific steps are as follows (also shown in Figure~\ref{fig5}):

\begin{enumerate}[Step 1]
\item Alice$_1$ sends preparation request to Charlie, then Charlie forwards it to Bob, where the graph state $\left \lvert G\right 
\rangle$ requested is an $n$-qubit state corresponding to a $2$-colorable graph and $n \ge 6$.
\item Bob prepares a $5Kn$-qubit state ${\left\lvert G \right\rangle ^{5K}}$, where$K = \left\lceil {{n^2}\log n} \right\rceil ,\left \lceil  \cdot  \right\rceil $  is ceiling function. Then he send it to Charlie one by one qubit.
\item Charlie divides the state sent from Bob into $5K$ $n$-qubit states $\left \lvert G \right \rangle$ in turn and stores them in $n$-qubit registers respectively. He selects $2K$ registers independently, evenly and randomly from these $5K$ registers and keeps them, and then sends the rest $3K$ to Alice$_1$ in turn.
\item Alice$_1$ divides the states sent from Charlie into $3K$ $n$-qubit registers and selects $2K$ registers independently, evenly and randomly from them, then executes Verification Algorithm (see Algorithm~\ref{algo1}) where $C = {\frac{K}{2n}}$.
\item If it accepts, Alice$_1$ considers Bob honest and proceeds to the next step, and otherwise considers Bob dishonest and refuses to pay for services.
\item Alice$_1$ randomly selects one register from the remaining $K$ registers and discards the others, then uses this register to perform MBQC, i.e., measures particles on the basis determined by her algorithm.
\item If Alice$_1$ claims that Bob is dishonest, Bob can ask Charlie for public verification, and then Charlie randomly selects a third party Alice$_t$ from Alice$_2$, Alice$_3$,$\cdots$, Alice$_l$ to send verification request.
\item If Alice$_t$ accepts the request, Charlie sends the $2K$ copies in his hand to Alice$_t$ in turn. According to the graph state type, Alice$_t$ executes Verification Algorithm, where $C = \frac{3K}{4n}$. If it accepts, Alice$_t$ claims that Alice$_1$ is dishonest, otherwise Bob is dishonest.
\end{enumerate}

\begin{figure}[ht]
\centering
\includegraphics[width=0.8\textwidth]{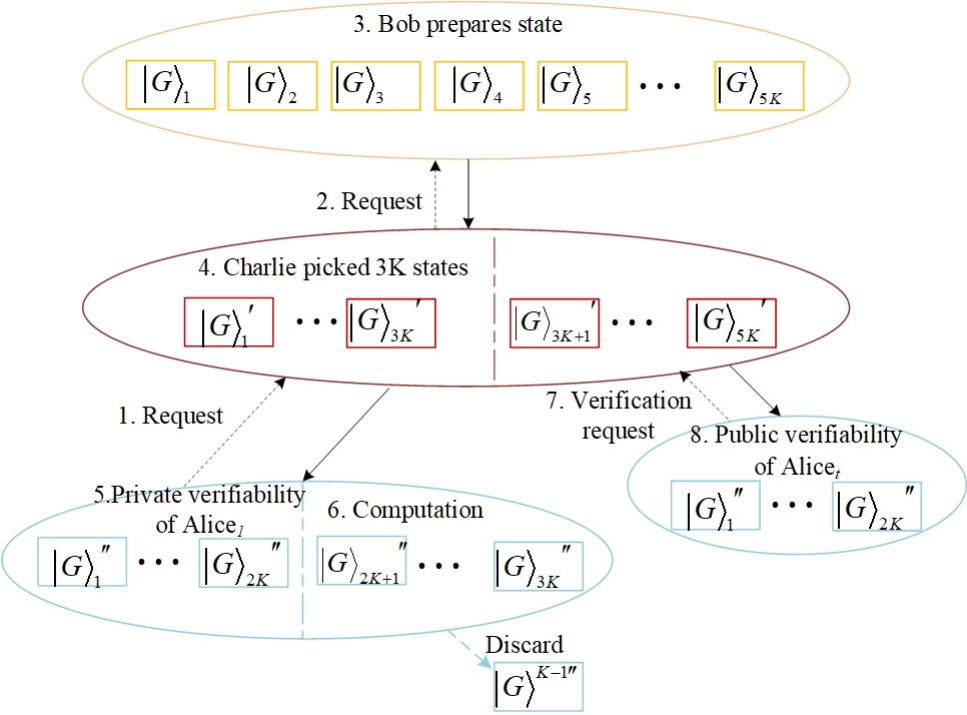}
\caption{The process of public verifiable protocol. Solid lines represent graph states transmission and dotted lines represent requests transmission. }\label{fig5}
\end{figure}

\section{Performance analysis}\label{sec4}

\subsection{Completeness analysis}\label{subsec4.1}

Completeness means that when Bob faithfully prepares the required graph state, it must be accepted by Alice$_1$ or Alice$_t$ with high probability. In Step~3 of Algorithm~\ref{algo1}, the random variable $M_j^\varrho  = \prod\limits_{i \in {S_j}} {{{{x_i}\prod\nolimits_{k \in N\left( i \right)} {{z_k}}  + 1} \over 2}}  \in \left\{ {0,1} \right\}$ to be calculated is the measurement result of $\prod\limits_{i \in {S_j}} {{{{g_i} + I} \over 2}} $. According to quantum measurement theory, we have
\begin{equation}\label{eq2}
\overline M _j^\varrho  = Tr\left( {\prod\limits_{i \in {S_j}} {{{{g_i} + I} \over 2}} \varrho } \right),
\end{equation}
where $\overline M _j^\varrho$ is the mathematical expectation of $M_j^\varrho$. Assume Bob prepares the correct state $\varrho  = \left\lvert G \right\rangle \left\langle G \right\lvert$, then we have 
\begin{equation}
\overline M _j^\varrho  = Tr\left( {\prod\limits_{i \in S_j} {\frac{g_i + I} {2}} \left\lvert G \right\rangle \left\langle G \right\lvert} \right) =Tr\left(\left \lvert G\right\rangle \left\langle G \right  \lvert  \right) = 1.
\end{equation}
We have $M _j^\varrho\in \{0,1\}$, which means for each register $k \in {\Pi ^{\left( j \right)}}$, $M_j^{{\varrho _k}} = 1$, i.e.,${K_j} = 0$. Therefore $\forall 0 \le C \le 2K, {K_1} + {K_2} \le C$, i.e., it must be accepted, leading to the completeness.

\subsection{Soundness analysis}\label{subsec4.2}

Soundness means that if Alice$_1$ or Alice$_t$ accepts the state $\varrho$ prepared by Bob, it must close to the graph state required by Alice$_1$ with high probability. Fidelity $F = \left \langle G \right \lvert \varrho \left \lvert G \right \rangle $ is generally used to measure the closeness. Fidelity estimation is based on the following inequality \cite{RN15}:
\begin{equation}\label{eq4}
F\ge \frac{1}{2} -\frac{1}{2}Tr\left( {W^{\left( 2 \right)}\varrho } \right).
\end{equation} In our protocol, the expectation $Tr\left ({W^{\left (2 \right)} \varrho} \right) $is obtained by measuring the entanglement witness $W^{\left (2 \right)}$ detecting $\left \lvert G\right \rangle$ to determine whether the state $\varrho$ is close to $\left \lvert G\right \rangle$, so as to verifies the behavior of Bob. We have the following theorem about the soundness of our protocol:

\begin{theorem}\label{theorem}  
In our protocol,
\begin{enumerate}[(1)]
\item When Alice$_t$ hasn't involved in arbitration, if Alice$_1$ measures ${K_1} + {K_2} \le \frac{K}{2n}$, then we have 
\begin{equation}\label{eq5}
 F\ge 1 - {{4\sqrt {\lambda _1}  + 1} \over n}
 \end{equation}
 with a probability 
\begin{equation}\label{eq6}
P\ge 1 - 4n^{-\frac{\lambda _1 }{2}},
\end{equation} 
where $\lambda_1$ is arbitrary constant satisfying $\log_n{16} \le \lambda_1\le \frac{{\left( n - 1\right)}^2}{16}$.
\item When Alice$_t$ involves in arbitration, if Alice$_t$ measures ${K_1} + {K_2} \le \frac{3K}{4n}$, then we have 
\begin{equation}\label{eq7}
 F\ge 1 - {{3\sqrt {\lambda_2}  + 1} \over n}
 \end{equation}
 with a probability 
\begin{equation}\label{eq8}
P\ge 1 - 4n^{-\lambda _2 },
\end{equation} 
where $\lambda_2$ is arbitrary constant satisfying $\log_n{4} \le \lambda_2\le \frac{{\left( n - 1\right)}^2}{9}$.
\end{enumerate}
\end{theorem}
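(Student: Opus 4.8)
The plan is to turn the witness expectation into a fidelity statement and then control that statement by a tail bound on the two binomial counts $K_1,K_2$. First I would expand $Tr\bigl(W^{(2)}\varrho\bigr)$ from the definition of $W^{(2)}$ in $(\ref{eq1})$: since $Tr(\varrho)=1$ and, by $(\ref{eq2})$, $Tr\bigl(\prod_{i\in S_j}\tfrac{g_i+I}{2}\,\varrho\bigr)=\overline M_j^{\varrho}$, I get $Tr\bigl(W^{(2)}\varrho\bigr)=3-2\bigl(\overline M_1^{\varrho}+\overline M_2^{\varrho}\bigr)$. Substituting this into the fidelity bound $(\ref{eq4})$ collapses it to the clean form $F\ge \overline M_1^{\varrho}+\overline M_2^{\varrho}-1=1-(q_1+q_2)$, where I set $q_j:=1-\overline M_j^{\varrho}=\Pr[M_j^{\varrho}=0]$. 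Hence it suffices to show that acceptance forces $q_1+q_2$ to be small.

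Next I would fix the statistics. Since the $K$ registers of group $j$ all carry the same state $\varrho$, the per-register outcomes lie in $\{0,1\}$ and are i.i.d.\ with $\Pr[\,M_j^{\varrho_k}=0\,]=q_j$, so $K_j$ is a sum of $K$ such indicators with $E[K_j]=Kq_j$, and $K_1,K_2$ are independent because the groups are disjoint. The empirical ratio $(K_1+K_2)/K$ thus estimates $q_1+q_2$, and the test $K_1+K_2\le C$ with $C=K/(2n)$ says precisely that this estimate is at most $1/(2n)$.

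The core step is a one-sided tail bound, which I would run as a contrapositive: if the true $q_1+q_2$ exceeded the target $\tau_1=\tfrac{4\sqrt{\lambda_1}+1}{n}$, then $E[K_1+K_2]=K(q_1+q_2)$ would sit well above $C$, and a Chernoff/Hoeffding lower-tail estimate applied to each group and combined by a union bound would give $\Pr[K_1+K_2\le C]\le 4n^{-\lambda_1/2}$. The crux is to choose the deviation between the empirical mean and $\tau_1$ of order $\sqrt{\lambda_1}/n$ and to feed in $K=\lceil n^2\log n\rceil$: the exponent, of order $K$ times the squared deviation, then becomes $\Theta(\lambda_1\log n)$, turning the tail into a power $n^{-\Theta(\lambda_1)}$. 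Reading this on acceptance yields $q_1+q_2\le \tau_1$, hence $F\ge 1-\tau_1$, which is $(\ref{eq5})$ with probability $(\ref{eq6})$. I would then verify the admissible range of $\lambda_1$: the lower bound $\lambda_1\ge\log_n 16$ is exactly what makes $4n^{-\lambda_1/2}<1$, so that $(\ref{eq6})$ is a nontrivial probability, while the upper bound $\lambda_1\le (n-1)^2/16$ guarantees $\tfrac{4\sqrt{\lambda_1}+1}{n}\le 1$, i.e.\ the fidelity bound $(\ref{eq5})$ is nonnegative (equivalently, the Chernoff contraction factor stays $\le 1$). Part (2) follows from the identical pipeline with $C=3K/(4n)$; only the numerical choice of the deviation changes, producing the constants $3\sqrt{\lambda_2}+1$ and the sharper exponent $n^{-\lambda_2}$.

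I expect the main obstacle to be this concentration step. The quantity controlling fidelity is the \emph{sum} $q_1+q_2$ of two separate and unknown biases measured on disjoint groups, so I must handle the worst-case split of $q_1+q_2$ between the two groups and make the per-group union bound plus the deviation-versus-threshold bookkeeping reproduce the stated constants once $K=\lceil n^2\log n\rceil$ is substituted; matching the precise coefficients ($4$ versus $3$, and the exponent $\lambda/2$ versus $\lambda$) rather than merely the asymptotic form is the delicate part. A secondary subtlety is the i.i.d.\ modeling itself: a malicious Bob could correlate the registers, so the random partition of the $2K$ registers into the two groups (Step~1 of Algorithm~\ref{algo1}) is what must be invoked to justify treating the outcomes as independent draws rather than adversarially arranged ones.
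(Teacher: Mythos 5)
Your reduction of the witness to the fidelity statement, $F\ge \overline M_1^{\varrho}+\overline M_2^{\varrho}-1$, matches the paper, and your reading of the admissible ranges of $\lambda_1,\lambda_2$ (lower bound makes the probability nontrivial, upper bound keeps the fidelity bound nonnegative) is consistent with the appendix. But the core concentration step has a genuine gap, and you have in fact put your finger on it yourself without resolving it. You model the $2K$ verified registers as i.i.d.\ copies of a single state $\varrho$, so that $K_j$ is binomial with mean $Kq_j$, and you then invoke a Chernoff/Hoeffding tail bound. In the adversarial setting of the theorem this modeling is not available: Bob may prepare an arbitrary joint (correlated, even entangled) state across all $5K$ registers, so the per-register outcomes are neither independent nor identically distributed, and randomizing the partition in Step~1 of Algorithm~\ref{algo1} does \emph{not} restore independence. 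This is precisely why the paper's proof (Appendix~\ref{secA}) is built on \emph{Serfling's bound for sampling without replacement} (Eq.~(\ref{eqA1})), which requires no independence across registers: the randomly sampled groups' counts bound the counts that \emph{would} be obtained on the unmeasured remainder, whatever the joint state is. A binomial Chernoff bound cannot substitute for this lemma, so your pipeline as written fails against a correlating Bob.

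A second, related defect is \emph{which} state your argument certifies. Your bound controls $q_1+q_2$ for the state that was measured, but soundness must certify the registers that were \emph{not} measured — the $3K$ registers held by Alice$_1$, one of which is used for computation. The paper's chain is: Serfling (measured sample $\rightarrow$ hypothetical counts on the unmeasured $3K$ registers, Eqs.~(\ref{eqA5})--(\ref{eqA6})), then Azuma--Hoeffding (hypothetical empirical mean $\rightarrow$ expectation $\overline M_j^{\varrho}$, Eq.~(\ref{eqA7})), then the witness inequality. Under your i.i.d.\ assumption this distinction collapses, which is exactly why the assumption begs the question. Finally, part (2) is not the "identical pipeline with $C=3K/(4n)$": its sampling structure is nested (Charlie retains $2K$ of $5K$; Alice$_t$'s measurements must certify Alice$_1$'s $3K$), and the theorem's constants arise from the two Serfling applications with $N=4K$ and $N=3K$ (the factors $4\sum_{k\in\Pi^{(1)}}Y_k+4Kv$ and $3\sum_{k\in\Pi^{(2)}}Y_k+3Kv$), not from a discretionary choice of deviation parameter.
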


\begin{proof}
In the theorem, (1) has been proved \cite{RN15} and we only need to prove (2), using some existing probability inequalities. If we perform the $j$-th measurement setting on the rest $3K$ registers, then for each group of $K$ registers selected by Charlie, we can obtain an upper bound of the number of registers satisfying $M_j^{\varrho} = 0$ in the rest $3K$ registers measured and the relevant confidence probability, Hence, the lower bound of $\sum\limits_{k = 1}^{3K} {M_1^{\varrho _k}}$ is directly given. We then obtain a lower bound of $\overline{M}_1^{\varrho}$ and the relevant confidence probability. By Eq.~\ref{eq2} we can obtain a lower bound of $ Tr\left( {\prod\limits_{i \in {S_j}} {{{{g_i} + I} \over 2}} \varrho } \right)$. By Eq.~\ref{eq1} and Eq.~\ref{eq4}, we finally prove that the fidelity $F$ satisfies a lower bound with a certain confidence probability $P$. See Appendix~\ref{secA} for details.
\end{proof}

Therefore, if the verification is passed, then the state prepared by Bob is close to the required graph state with high probability, leading to the soundness.

In the protocol, $K=O\left ({n ^ 2} \log n \right)$ so that the probability in the theorem is $1 - O \left ({n ^ {-\lambda}}\right) $ for a constant $\lambda$, which is high enough; we set $n\ge 6$ so that the above $\lambda$ exists. See Appendix~\ref{secA} for details.

\subsection{Efficiency analysis}\label{subsec4.3}

Efficiency refers to the number of copies of the resource state prepared by the protocol and the number of local measurements required. As mentioned above, we set $K=O\left ({n ^ 2} \log n \right)$ so that the probability is high enough. The detail parameters of our and Sato et al.'s protocol\cite{RN18} are shown in Table~\ref{table1}. 

\begin{table}[ht]
\begin{center}
\begin{minipage}{\textwidth}
\caption{The detail parameters of public verifiable MBQC protocols}\label{table1}%
\begin{tabular}{@{}llll@{}}
\toprule
Protocol & Copies numbers  & Parameters ranges & Observables measured\\
\midrule
Sato et al.'s   & $\left\lvert G \right\rangle ^{2k + m + 1}$   & $k \ge 4{n^2} - 1$, $m \ge \left( {2\ln 2} \right){K^n}{n^5}$  & $\prod\limits_{i = 1}^n {g_i^{{k_i}}} $  \\
Our  & ${\left\lvert G \right\rangle ^{5K}}$   & $K \ge \left\lceil {{n^2}\log n} \right\rceil $ & $\prod\limits_{i \in {S_j}}^n {{g_i}} $  \\
\botrule
\end{tabular}
\end{minipage}
\end{center}
\end{table}

We first consider the number of copies. For the same $n$-qubit graph state $\left\lvert G \right\rangle $, in Sato et al.'s protocol, the number of copies  is 
\begin{equation}
\begin{aligned}
2k + m + 1& \ge 8{n^2} - 2 + \left( {2\ln 2} \right){k^n}n + 1  \\
&= 8{n^2} - 2 + \left( {2\ln 2} \right){k^n}n + 1 \\
&= \Theta \left( {{n^{2n + 5}}} \right)
\end{aligned},
\end{equation}
i.e., $O\left( {{n^{2n + 5}}} \right)$ at least; in our protocol it is $5K \ge 5\left\lceil {{n^2}\log n} \right\rceil  = \Theta \left( {{n^2}\log n} \right)$, i.e., $O\left( {{n^2}\log n} \right)$ at least. Thus our protocol has advantages.

Then, the number of required local measurement is taken into account. In Sato et al.'s protocol, they measured stabilizers $\prod\limits_{i = 1}^n {g_i^{{k_i}}} $ on $k$ or $2k$ $n$-qubit graph states, where  $k = \left( {{k_1}{k_2} \ldots {k_n}} \right)$ is a string randomly selected. Considering that the local measurement decomposition of $\prod\limits_{i = 1}^n {g_i^{{k_i}}} $may be complex, and for a specific graph structure it may be $O\left( {{2^n}} \right)$\cite{RN24}, thus the number of local measurements is $O\left( {kn} \right) = O\left( {{n^{2n + 5}}{2^n}} \right)$; in our protocol, we measure observables $\prod\limits_{i \in {S_j}}^n {{g_i}} $ on $2K$ $n$-qubit graph states and for each graph state the number of the local measurements is $O(n)$ as mentioned above, thus the total number is $O\left( {Kn} \right) = O\left( {{n^3}\log n} \right)$. Therefore, our protocol still has advantages.

\section{Error propagation and mitigation}\label{sec5}

It is worth noting that all the above analyses are based on the fact that the quantum channel does not contain noise. However, the actual channel has a certain amount of noise, so there must be certain errors in the graph state's propagation. Since Bob is not assumed to be honest in our protocol (i.e., the sent graph state is not guaranteed to be correct), it is impossible to determine whether the graph state is disturbed by noise by comparing it with the target state. To mitigate the impact of noise, we have the following two methods.

\noindent (1) \textit{Use channel noise detection} For each $n$-qubit graph state $\left \lvert G \right \rangle$, the sender (Bob or Charlie) insert some additional qubits that are not entangled with the graph state into the $n$ qubits. These qubits' initial states are agreed in advance by both the sender and the receiver (Charlie or Alice), and they will be sent together with the graph state as a whole. In this way, if some noise is encountered during one transmission, the state of these extra qubits will be changed with a certain probability, and thus the noise can be detected. If the noise is considered to be too high, the receiver may reject the communication and request retransmission.

For example, assuming that $k$ additional qubits are in the initial state $\left \lvert 0\right \rangle$, bit-flip noise occurs in the quantum channel with probability $p$, and $r$ qubits are flipped to $\left \lvert 1\right \rangle$ after transmission. Since $r$ has a binomial distribution, its expectation is $kp$. By Azuma-Hoeffding bound\cite{RN21} (see Appendix~\ref{secA} for details), $\forall t>0$, we have 
\begin{equation}
    \Pr \left(  \left(1-\frac{r}{k}\right) - \left(1-p\right) \le t\right)=\Pr \left(  p\le\frac{r}{k}+t\right)  \ge 1 - \exp\left(-2kt^2\right).
\end{equation}
If $p_{th}$ is the noise threshold, then 
\begin{equation}
    \Pr \left(  p\le p_{th}\right)  \ge 1 - \exp\left(-2k\left(p_{th}-\frac{r}{k}\right)^2\right).
  \end{equation}
Let $1 - \exp\left(-2k\left(p_{th}-\frac{r}{k}\right)^2\right)\ge 99\%$, then $r\le kp_{th}-p\sqrt{k\ln{10}}=r_{th}$. For instance, let $k=5$, $p=1\%$, then $r_{th}=3.44$. If $r\le 3$, then we can say $\Pr \left(  p\le p_{th}\right)  \ge 99\%$. The larger the $k$, the smaller the $\frac{r_{th}-\left \lfloor r_{th}\right \rfloor}{r_{th}}\le \frac{1}{r_{th}}$, and the tighter the upper bound. Other noise types can be detected similarly, and only the corresponding initial states and measurement bases need to be agreed.

\noindent (2) \textit{Use fault-tolerant quantum computing (FTQC)} As mentioned by Morimae et al.\cite{RN7}, using a computational model that can handle particle losses can effectively mitigate noise. An $[\![n,k,d]\!]$ quantum error-correcting codes (QECC) encodes $n$ physical qubits into $k$ logical qubits, and a QECC with distance $d$ can correct up to $\frac{d-1}{2}$ errors on arbitrary qubits\cite{RN25}. The entanglement of the graph state will not be destroyed in a qubit stabilizer QECC scheme, because it is not necessary to really know the initial state of the target qubit, but only to measure and compare the relative change between the physical qubits. On the other hand, only the measurements and quantum gates of single-qubit Pauli operators $X,Y,Z$ are required, which means that even receiver with weak quantum ability can implement it. In existing fault-tolerant quantum computing, the noise threshold can even reach $24.9\%$\cite{RN26}.

Note that the above two methods can be used in combination because they are independent of each other. First, the channel noise detection can ensure that the noise factor is lower than a certain threshold, and then the fault tolerance mechanism can correct small errors. By using the two methods, the error caused by channel noise can be mitigated to a certain extent. Of course, when the noise reaches a certain level, even retransmission will fail.

\section{Conclusion}\label{sec6}

In this paper, we proposes a public verifiable measurement-only blind quantum computation protocol. By introducing a storage center, it allows the third-party verifier to be any other client randomly selected. Compared with the previous protocol, our protocol is public verifiable in the true sense. In the protocol, the fidelity estimation between arbitrary states and graph states are realized by measuring the entanglement witnesses detecting the graph states. Without loss of completeness and soundness, the nature of $2$-colorable graph states reduce the number of local measurements ($O\left(n^3 \log n \right)$) and the number of copies of the graph states resources ($O\left (n^2 \log n \right) $). Compared with the arbitrable protocol of Sato et al.\cite{RN18} (the number of local measurements is $O(n^{2n+5})$ and of copies of the resource states is $O(n^{2n+5}2^n)$), our protocol has obvious advantages. We also consider the communication error and give some error mitigation schemes.

On the other hand, since we have only considered $2$-colorable graphs, the proposed protocol is not applicable to arbitrary graph states. For more general graph states, more research is needed to further improve the efficiency and performance of existing schemes.

\bmhead{Acknowledgments} 
The authors would like to thank the anonymous reviewers and editors for their comments that improved the quality of this paper. This work is supported by the National Natural Science Foundation of China (62071240), the Innovation Program for Quantum Science and Technology (2021ZD0302902), and the Priority Academic Program Development of Jiangsu Higher Education Institutions (PAPD).

\begin{appendices}

\section{Proof of Theorem~\ref{theorem}}\label{secA}

In the theorem, (1) has been proved \cite{RN15} with a condition $n\ge 6$. Now we prove (2). At first we introduce the following two probability bounds which will be used in the analysis, where $\Pr \left (\cdot \right) $ represents the event probability and $\rm{E}\left (\cdot \right) $ represents the mathematical expectation:

\begin{enumerate}[(1)]
\item \textit{Serfling's bound\cite{RN20}} Given a set $Y = (Y_1,Y_2,...,Y_T )$ of $T$ binary random variables with $Y_k\in \{0, 1\}$ and two arbitrary positive integers $N$ and $K$ that satisfy $T = N + K$, select $K$ samples that are distinguished from each other independently, evenly and randomly from $Y$, and let $\Pi$ be the set of these samples, $\overline{\Pi}=Y-\Pi$, then $\forall 0<v<1$, we have 
\begin{equation}\label{eqA1}
			\begin{aligned}
&\Pr \left( {\sum\limits_{k \in \overline \Pi  } {{Y_k}}  \le {N \over K}\sum\limits_{k \in \Pi } {{Y_k}}  + Nv} \right) \\
		&\ge 1 - \exp \left( { - {{2{v^2}N{K^2}} \over {\left( {N + K} \right)\left( {K + 1} \right)}}} \right).
\end{aligned}\end{equation}

\item \textit{Azuma-Hoeffding bound\cite{RN21}} Given independent random variables ${\xi _1},{\xi _2}, \cdots ,{\xi _n}$ where ${\xi _i} \in \left[ {{a_i},{b_i}} \right], i=1,2,\cdots,n$, then $\forall t > 0$, we have 
\begin{equation}\label{eqA2}
		\begin{aligned}
&\Pr \left( {{{{\xi _1} + {\xi _2} +  \cdots  + {\xi _n}} \over n} - {\rm{E} }\left( {{{{\xi _1} + {\xi _2} +  \cdots  + {\xi _n}} \over n}} \right) \le t} \right)  \\
&\ge 1 - \exp \left( { - {{2{n^2}{t^2}} \over {\sum\limits_{i = 1}^n {\left( {{b_i} - {a_i}} \right)} }}} \right).
\end{aligned}\end{equation}
\end{enumerate}

For the first $K$ registers selected, we denote them as $\Pi^{(1)}$ and the rest $4K$ as $\overline{\Pi}^{(1)}$. Let $T = 5K, N=4K, Y_k = \left\{\begin{matrix}0,M_1^{\varrho'_k} = 1 \\1,M_1^{\varrho'_k} = 0  \end{matrix}\right.$, where $\varrho'_k$ is the state in the $k$-th register in $\Pi^{(1)}$ or $\overline{\Pi}^{(1)}$, then we have 
\begin{equation}\label{eqA3}
\Pr \left( {\sum\limits_{k \in {{\overline \Pi  }^{\left( 1 \right)}}} {{Y_k}}  \le {{4K} \over K}\sum\limits_{k \in {\Pi ^{\left( 1 \right)}}} {{Y_k}}  + 4Kv} \right) \ge 1 - \exp \left( { - {{2{v^2}4K{K^2}} \over {\left( {4K + K} \right)\left( {K + 1} \right)}}} \right)
\end{equation}
by Eq~\ref{eqA1}, which means if we perform the $j$-th measurement on the rest $4K$ registers, then the upper bound  of the number of the registers satisfying $M_1^\varrho  = 0$ (i.e., ${Y_k} = 1$) in ${\overline \Pi  ^{\left( 1 \right)}}$ is $4\sum\limits_{k \in {\Pi ^{\left( 1 \right)}}} {{Y_k}}  + 4Kv$, with the  probability on the right-side of Eq~\ref{eqA3}. 
Similarly, for the second $K$ registers selected, we denote them as $\Pi^{(2)}$ and the rest $3K$ as $\overline{\Pi}^{(2)}$. Let $T = 4K, N=3K, Y_k = \left\{
\begin{matrix}
{0,M_2^{\varrho'_k} = 1}\\
{1,M_2^{\varrho'_k} = 0}
\end{matrix} \right.$, where $\varrho'_k$ is the state in the $k$-th register in $\Pi^{(2)}$ or $\overline{\Pi}^{(2)}$, then we have 
\begin{equation}\label{eqA4}
\Pr \left( {\sum\limits_{k \in {{\overline \Pi  }^{2}}} {{Y_k}}  \le {{3K} \over K}\sum\limits_{k \in {\Pi ^{2}}} {{Y_k}}  + 3Kv} \right) \ge 1 - \exp \left( { - {{2{v^2}3K{K^2}} \over {\left( {3K + K} \right)\left( {K + 1} \right)}}} \right),
\end{equation}
which means if we perform the $j$-th measurement on the rest $3K$ registers, then the upper bound of the number of the registers satisfying $M_2^\varrho  = 0$ in ${\overline \Pi  ^{\left( 2 \right)}}$ is $4\sum\limits_{k \in {\Pi ^{\left( 2 \right)}}} {{Y_k}}  + 4Kv$, with the  probability on the right-side of Eq~\ref{eqA4}. In the protocol, any two clients do not trust each other, thus it can be considered that the rest $3K$ registers haven't been measured. If we perform the first measurement on the rest $3K$ registers, there will be $3K - \left( {4\sum\limits_{k \in {\Pi ^{\left( 1 \right)}}} {{Y_k}}  + 4Kv} \right)$ registers satisfying $M_1^\varrho  = 1$ at least, i.e., 
\begin{equation}\label{eqA5}
\sum\limits_{k = 1}^{3K} {M_1^{{\varrho _k}}}  \ge 3K - \left( {4\sum\limits_{k \in {\Pi ^{\left( 1 \right)}}} {{Y_k}}  + 4Kv} \right).
\end{equation}
Similarly, we have 
\begin{equation}\label{eqA6}
\sum\limits_{k = 1}^{3K} {M_2^{{\varrho _k}}}  \ge 3K - \left( {3\sum\limits_{k \in {\Pi ^{\left( 2 \right)}}} {{Y_k}}  + 3Kv} \right).
\end{equation}
Let $n = 3K, \xi _k = M_1^\varrho $ or $M_2^\varrho $, then by Eq~\ref{eqA2} we have 
\begin{equation}\label{eqA7}
\Pr \left( {{1 \over {3K}}\sum\limits_{k = 1}^{3K} {M_1^{{\varrho _k}}}  - \overline M _1^\varrho  \le t} \right) \ge 1 - \exp \left( { - 2 \cdot 3K{t^2}} \right).
\end{equation}
By $Tr\left( {\prod\limits_{i \in {S_1}} {{{{g_i} + I} \over 2}} \varrho } \right) = \overline M _1^\varrho $ and Eq~\ref{eqA5} we have 
\begin{equation}\label{eqA8}
\begin{aligned}
&\Pr \left( {Tr\left( {\prod\limits_{i \in {S_1}} {{{{g_i} + I} \over 2}} \varrho } \right) > 1 - {1 \over {3K}}\left( {4\sum\limits_{k \in {\Pi ^{\left( 1 \right)}}} {{Y_k}}  + 4Kv} \right) - t} \right) \\
&\ge 1 - \exp \left( { - 6K{t^2}} \right)
\end{aligned},
\end{equation} 
and similarly we have 
\begin{equation}\label{eqA9}
\begin{aligned}
&\Pr \left( {Tr\left( {\prod\limits_{i \in {S_2}} {{{{g_i} + I} \over 2}} \varrho } \right) > 1 - {1 \over {3K}}\left( {3\sum\limits_{k \in {\Pi ^{\left( 2 \right)}}} {{Y_k}}  + 3Kv} \right) - t} \right) \\
&\ge 1 - \exp \left( { - 6K{t^2}} \right)
\end{aligned}.
\end{equation}
Therefore, we have 
\begin{equation}\label{eqA10}
\begin{aligned}
  & F \ge {1 \over 2} - {1 \over 2}Tr\left( {{W^{\left( 2 \right)}}\varrho } \right)  \cr 
  &  = {1 \over 2} - {1 \over 2}Tr\left( {3I\varrho } \right) + {1 \over 2}Tr\left( {2\prod\limits_{i \in {S_1}} {{{{g_i} + I} \over 2}\varrho } } \right) + Tr\left( {2\prod\limits_{i \in {S_2}} {{{{g_i} + I} \over 2}\varrho } } \right)  \cr 
  &  =  - 1 + 1 - {1 \over {AK}}\left( {4\sum\limits_{k \in {\Pi ^{\left( 1 \right)}}} {{Y_k}}  + 4Kv} \right) - t + 1 - {1 \over {3K}}\left( {3\sum\limits_{k \in {\Pi ^{\left( 2 \right)}}} {{Y_k}}  + 3Kv} \right) - t  \cr 
  &  = 1 - \left( {2 + {1 \over 3}} \right)v - 2t - {1 \over {3K}}\left( {4\sum\limits_{k \in {\Pi ^{\left( 1 \right)}}} {{Y_k}}  + 3\sum\limits_{k \in {\Pi ^{\left( 2 \right)}}} {{Y_k}} } \right)  \cr 
  &  \ge 1 - \left( {2 + {1 \over 3}} \right)v - 2t - {4 \over {3K}}\left( {\sum\limits_{k \in {\Pi ^{\left( 1 \right)}}} {{Y_k}}  + \sum\limits_{k \in {\Pi ^{\left( 2 \right)}}} {{Y_k}} } \right)  \cr 
  &  = 1 - \left( {2 + {1 \over 3}} \right)v - 2t - {4 \over {3K}}\left( {{K_1} + {K_2}} \right) .
  \end{aligned}
\end{equation} 
with a probability 
\begin{equation}\label{eqA11}
\begin{aligned}
  & P \ge \left[ {1 - \exp \left( { - {{8{v^2}K} \over {5\left( {1 + {1 \over K}} \right)}}} \right)} \right]\left[ {1 - \exp \left( { - {{3{v^2}K} \over {2\left( {1 + {1 \over K}} \right)}}} \right)} \right]{\left[ {1 - \exp \left( { - 6K{t^2}} \right)} \right]^2}   \\
  &  \ge {\left[ {1 - \exp \left( { - K{v^2}} \right)} \right]^2}{\left[ {1 - \exp \left( { - 6K{t^2}} \right)} \right]^2}
  \end{aligned},
\end{equation}
where the second inequality in Eq~\ref{eqA11} holds as long as $K \ge 2$. Obviously $\sum\limits_{k \in {\Pi ^{\left( 1 \right)}}} {{Y_k}}  = {K_1}$ and $\sum\limits_{k \in {\Pi ^{\left( 2 \right)}}} {{Y_k}}  = {K_2}$ in Eq~\ref{eqA10}. To make $F = \left\langle G \right\lvert\rho \left\lvert G \right\rangle $ is $1 - O\left( {{1 \over n}} \right)$, which is high enough, we need that $v = O\left( {{1 \over n}} \right), t = O\left( {{1 \over n}} \right), {4 \over {3K}}\left( {{K_1} + {K_2}} \right) \le {1 \over n}$ which leads to $F = 1 - O\left( {{1 \over n}} \right)$. Therefore, we set $v = {{\sqrt {{\lambda _2}} } \over n}, t = {{\sqrt {{\lambda _2}} } \over {\sqrt 6 n}}$, then consider the acceptance condition ${K_1} + {K_2} \le {{3K} \over {4n}}$ in Algorithm~\ref{algo1}, we have 
\begin{equation}
F \ge 1 - {7 \over 3}{{\sqrt {{\lambda _2}} } \over n} - 2{{\sqrt {{\lambda _2}} } \over {\sqrt 6 n}} - {1 \over n} = 1 - {{\left( {{7 \over 3} + {2 \over {\sqrt 6 }}} \right)\sqrt {{\lambda _2}}  + 1} \over n} \ge 1 - {{3.15\sqrt {{\lambda _2}}  + 1} \over n}
\end{equation}
with a probability
\begin{equation}
P \ge {\left[ {1 - \exp \left( { - {{{\lambda _2}} \over {{n^2}}}K} \right)} \right]^4} \ge 1 - 4\exp \left( { - {{{\lambda _2}} \over {{n^2}}}K} \right) \ge 1 - 4\exp \left( { - {{{\lambda _2}} \over {{n^2}}}{n^2}\log n} \right).
\end{equation}
 To make the probability $P=1 - O \left ({n ^ {-\lambda}}\right) $ for a constant $\lambda$, we set $K = \left\lceil {{n^2}\log n} \right\rceil $, then 
 \begin{equation}
 P \ge 1 - 4\exp \left( { - {{{\lambda _2}} \over {{n^2}}}{n^2}\log n} \right) = 1 - 4{n^{ - {\lambda _2}}},
 \end{equation}which is high enough. The condition for above $F,P$ both to be positive is ${\log _n}4 \le {\lambda _2} \le {{{{\left( {n - 1} \right)}^2}} \over {10}}$, where $n \ge 5$. When $n \ge 5$, we have ${n^2} > {{{{\left( {n - 1} \right)}^2}} \over {10}}$, thus ${\lambda _2} < {n^2}$, then $v = {{\sqrt {{\lambda _2}} } \over n} < 1$. Consider the condition of (1), we have $n\ge 6$.

\end{appendices}


\end{document}